\newtheorem{thm}{Theorem}[section]
\newtheorem{lm}[thm]{\bf Lemma}
\newtheorem{df}[thm]{\bf Definition}
\newtheorem{rem}[thm]{\bf Remark}
\tikzset{
    state/.style={
           rectangle,
           rounded corners,
           draw=black, very thick,
           minimum height=2em,
           inner sep=1pt,
           text centered,
           },
		normal/.style={
           rectangle,
           rounded corners,
           draw=black, thin,
           minimum height=2em,
           inner sep=1pt,
           text centered,
           },
		every picture/.style={font issue=\footnotesize},
    font issue/.style={execute at begin picture={#1\selectfont}}
}
\title{\LARGE \bf
Timed Automata Approach for Motion Planning Using Metric Interval Temporal Logic
}
\author{Yuchen Zhou, Dipankar Maity and John S. Baras
\thanks{This work is supported by NSF grant CNS-1035655, NIST grant 70NANB11H148 and by DARPA (through ARO) grant W911NF1410384. The authors are with the Department of Electrical and Computer Engineering, and the Institute for Systems Research, University of Maryland, College Park, Maryland, USA. email: {\tt\small \{yzh89, dmaity, baras\}@umd.edu}.}
}
\begin{document}

\maketitle
\thispagestyle{empty}
\pagestyle{empty}

\begin{abstract}
In this paper, we consider the robot motion (or task) planning problem under some given time bounded high level specifications. We use metric interval temporal logic (MITL), a member of the temporal logic 
family, to represent the task specification and then we provide a constructive way to generate a timed automaton and methods to look for accepting runs on the automaton to find a feasible motion (or path) 
sequence for the robot to complete the task. 
\\
\end{abstract}

\begin{keywords}
 Timed automata, Temporal Logic, Metric Temporal Logic
\end{keywords}

\section{Introduction}
Motion planning and task planning have gained  an enormous thrust in the robotics community in the past decade or so. Though, motion (task) planning has attracted a great deal of research in the past few decades, however recently, researchers have come up with new metrics and methodology to represent motion and task specifications. Initially, motion planning for a mobile robot started with the aim of moving a point mass from an initial position to a final position in some optimal fashion. With course of time, people started to consider planning in cluttered domains (i.e. in presence of obstacles) and also accounted for the dimensionality and the physical constraints of the robot. 

Though we have efficient approaches for general motion planning, very few are available or scalable to plan in dynamic environments or under finite time constraints. Temporal logics have been used greatly 
to address complex motion specifications, motion sequencing and timing behaviors etc. Historically temporal logic was originated for model checking, validation and verification in software community \cite{baier2008}
and later on researchers found it very helpful to use Linear Temporal logic (LTL), Computational Tree logic (CTL), Signal Temporal logic (STL) etc. for representing complex motion (or task) specifications. 
The developments of tools such as SPIN \cite{SPIN}, NuSMV \cite{NuSMV} made it easier to check if a given specifications can be met by creating a suitable automaton and looking 
for a feasible path on that automaton. However, the construction of the automaton from a given temporal logic formula is based on the implicit assumption that there is no time constraints associated with the specification.

Currently motion planning for robots is in such a stage where it is very crucial to incorporate time constraints since these constraints can arise from different aspects of the problem: dynamic environment, sequential 
processing, time optimality etc. Planning with time bounded objectives is inherently hard due to the fact that every transition from one state to another in the associated automaton has to be carried out, 
by some controller, exactly in time from an initial configuration to the final configuration. Time bounded motion planning has been done in heuristic ways \cite{kant, Erdmann} and also by using 
mixed integer linear programming (MILP) framework \cite{richards, zhou}. In this paper, we are interested in extending the idea of using LTL for time-unconstrained planning to use MITL for time-constrained 
motion planning. In \cite{maity}, the authors proposed a method to represent time constrained planning task as an LTL formula rather than MITL formula. This formulation reduced the complexity of 
\textsc{Exp-space}-complete for MITL to \textsc{Pspace}-complete for LTL. However, the number of states in the generated B\"{u}chi automata increases with time steps. 

In this paper, we mainly focus on motion planning based on the construction of an efficient timed automaton from a given MITL specification. 
A dedicated controller to navigate the robot can be constructed for the general planning problem
once the discrete path is obtained from the automaton.  The earlier results on construction of algorithms to verify timing properties of real 
time systems can be found in \cite{Alur1996}. The complexity of satisfiability and model checking problems for MTL formulas has been already studied in \cite{SurveyOuaknine08} and it has been shown that commonly 
used real-time properties such as bounded response and invariance can be decided in polynomial time or exponential space. More works on the decidability on MTL can be found in \cite{MTLOuaknine05} 
and the references there in. The concept of alternating timed automata for bounded time model checking can be 
found in \cite{Jenkins2010}. \cite{Nickovic2010} talks about constructing deterministic timed automata from MTL specifications and this provides a unified framework to include all the future operators
of MTL. The key to the approach of \cite{Nickovic2010} was in separating the continuous time monitoring from the discrete time predictions of the future. We restrict our attention to generate timed automata 
from MITL based on the work done in \cite{Maler2006a}. It is done by constructing a timed automaton to generate a sequence of states and another to check whether the sequence generated is actually a valid one in the sense that it satisfies the given MITL specification.

The rest of the paper is organized as follows, section \ref{sec:pre} provides a background on MITL and the timed automata based approach for MITL. Section \ref{sec:motion_planning} illustrates how the timed automata can be used to motion synthesis and we also provide UPPAAL \cite{uppaal} implementation of the same. Section \ref{sec:case} gives some examples on different time bounded tasks and shows the implementation results. Section \ref{sec:continuous} provides a brief overview of how a continuous trajectory can be generated from the discrete plan. Finally, we conclude in section \ref{sec:conclusion}.

\section{Preliminaries} \label{sec:pre}

In this paper, we consider a surveying task in an area by a robot whose motion is abstracted to a graph. In particular for our particular setup, the robot motion is captured as a timed automaton (Fig. \ref{fig:map}). Every edge is a timed transition that represents navigation of the robot from one location to other in space and every vertex of the graph represents a partition of the space. Our objective is to find an optimal time path that satisfies the specification given by timed temporal logic.

\begin{figure}%
\centering
\begin{tikzpicture}[->,>=stealth']

 \node[normal] (S0) 
 {\begin{tabular}{l}
  pos0\\
	$z\leq 1$\\
 \end{tabular}};
  
 \node[normal,    	
  right of=S0, 	
  node distance=4cm, 	
  anchor=center] (S1) 	
 {\begin{tabular}{l}
  pos1:$B$\\
	$z\leq 1$\\
 \end{tabular}
 };
 
 \node[normal,
  below of=S0,
  yshift=-1cm,
  anchor=center] (S3) 
 {\begin{tabular}{l}
  pos3:$A$\\
	$z\leq 1$\\
 \end{tabular}
 };

 \node[normal,
  right of=S3,
  node distance=4cm,
  anchor=center] (S2) 
 {\begin{tabular}{l}
  pos2\\
	$z\leq 1$\\
 \end{tabular}
 };

 \path 
	(S0) 	edge  node[above]{$z\geq 1 | z:=0$} (S1)
	(S1)	edge (S0)
  (S1)  edge  node[left,align=center]{$z \geq 1$\\ $z:=0$} (S2)
	(S2) 	edge  (S1)
	(S2)	edge	node[below]{$z\geq 1 | z=0$} (S3)
	(S3)  edge  (S2)
	(S0) 	edge  node[left=0cm,align=center]{$z>0$\\$z:=0$} (S3)
	(S3)	edge	 (S0);
 
\end{tikzpicture}
\caption{Timed Automata based on cell decomposition and robot dynamics}
\label{fig:map}
\end{figure}
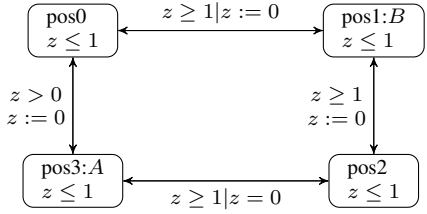

 \subsection{Metric Interval Temporal Logic (MITL)}

Metric interval temporal logic is a specification that includes timed temporal specification for model checking. It differs from Linear Temporal Logic on the part that it has constraints on the temporal operators.

The formulas for LTL are build on atomic propositions by obeying the following grammar.
\begin{df} \label{defLTL}
 \textit{The syntax of LTL formulas are defined according to the following grammar rules:}
 \begin{center}
 $\phi ::= \top ~| ~\pi~ |~\neg \phi~ | ~\phi \vee \phi ~| ~\mathbf{X} \phi | ~\phi \mathbf{U} \phi  ~ $
 \end{center}
 \end{df}
$\pi \in \Pi$ the set of propositions, $\top$ and $\bot(=\neg\top)$ are the Boolean constants $true$ and $false$ respectively.  $\vee$ denotes the disjunction operator and $\neg$ denotes the negation operator. $\mathbf{U}$ represents the Until operator. MITL extends the Until operator to incorporate timing constraints.

\begin{df} \label{def1}
 \textit{The syntax of MITL formulas are defined according to the following grammar rules:}
 \begin{center}
 $\phi ::= \top ~| ~\pi~ |~\neg \phi~ | ~\phi \vee \phi ~|~\phi \mathbf{U}_I \phi  ~ $
 \end{center} 
 \end{df}
 where $I\subseteq [0, \infty]$ is an interval with end points in $\mathbb{N} \cup \{\infty\}$.  $\mathbf{U}_I$ symbolizes the timed Until operator. Sometimes we will represent $\mathbf{U}_{[0, \infty]}$ by $\mathbf{U}$. 
Other Boolean and temporal operators such as conjunction ($\wedge$),  eventually within $I$ ($\Diamond_I$), always on $I$ ($\Box_I$) etc. can be represented using the grammar desired in definition \ref{def1}. For example, we can express time constrained eventually operator $\Diamond_I\phi \equiv \top \mathbf{U}_I\phi$ and so on. In this paper all the untimed temporal logic is transformed into until operator and all the timed operator is transformed to eventually within $I$, to make it easier to generate a timed automaton.

MITL is interpreted over $n$-dimensional Boolean $\omega$-sequences of the form $\xi: \mathbb{N} \rightarrow \mathbb{B}^n$, where $n$ is the number of propositions.
\begin{df}\label{ltlsym}
 \textit{The semantics of any MTL formula $\phi$ is recursively defined over a trajectory $(\xi, t)$ as:\\
 $(\xi, t) \models \pi$ iff $(\xi, t)$ satisfies $\pi$ at time $t$\\
 $(\xi, t) \models \neg \pi$ iff $(\xi, t)$ does not satisfy $\pi$ at time $t$\\
 $(\xi, t) \models \phi_1\vee \phi_2$ iff $(\xi, t) \models \phi_1$ or $(\xi, t) \models \phi_2$\\
 $(\xi, t) \models \phi_1\wedge \phi_2$ iff $(\xi, t) \models \phi_1$ and $(\xi, t) \models \phi_2$\\
 $(\xi, t) \models \bigcirc \phi$ iff $(\xi, t+1) \models \phi$ \\
 $(\xi, t) \models \phi_1\mathbf{U}_I \phi_2$ iff $\exists s \in I$ s.t. $(\xi, t+s) \models  \phi_2$ and $\forall$ $s' \leq s, ~ (\xi, t+s') \models \phi_1$.}

\end{df}
Thus, the expression $\phi_1 \mathbf{U}_I \phi_2$ means that $\phi_2$ will be true within time interval $I$ and until $\phi_2$ becomes true, $\phi_1$ must be true. 
The MITL operator $\bigcirc \phi$ means that the specification $\phi$ is true at next time instance, $\Box_I \phi$  means that $\phi$ is always true for the time duration $I$, 
$\Diamond_I \phi$ means that $\phi$ will eventually become true within the time interval $I$. Composition of two or more MITL operators can express very sophisticated 
specifications; for example $\Diamond_{I_1} \Box_{I_2} \phi$ means that within time interval $I_1$, $\phi$ will be true and from that instance it will hold true always for a duration
of $I_2$. Other Boolean operators such as implication ($\Rightarrow$) and equivalence ($\Leftrightarrow$) can be expressed using the grammar rules and semantics given in definitions 
\ref{def1} and \ref{ltlsym}. More details on MITL grammar and semantics can be found in \cite{MTL}, \cite{Alur1996}.  

\subsection{MITL and Timed Automata Based Approach}
An LTL formula can be transformed into a B\"{u}chi automaton which can be used in optimal path synthesis \cite{Smith2010} and automata based guidance \cite{wolff_automatonguided_2013}. Similarly, in this paper, we focus on developing a timed automata based approach for MITL based motion planning. MITL, a modification of Metric Temporal Logics (MTL), disallows the punctuation in the temporal interval, so that the left boundary and the right boundary have to be different. 
In general the complexity of model checking for MTL related logic is higher than that of LTL. The theoretical model checking complexity for LTL is \textsc{Pspace}-complete \cite{Sistla1985}. The algorithm that has been implemented is exponential to the size of the formula. MTL by itself is undecidable. 
The model checking process of MITL includes transforming it into a timed automaton \cite{Alur1996}\cite{Maler2006a}. CoFlatMTL and BoundedMTL defined in \cite{CoFlatMTLBouyer08} are more expressive fragments of MTL than MITL, which can be translated to LTL-Past but with exponential increase in size. SafetyMTL \cite{MTLOuaknine05} and MTL, evaluated over finite and discrete timed word, can be translated into alternative timed automata. Although theoretically, the results suggest many fragments of MTL are usable, many algorithms developed for model checking are based on language emptiness check, which are very different from the control synthesis i.e. finding a feasible path. From best of our knowledge, the algorithm that is close to implementation for motion planning is that of \cite{Maler2006a}.

This paper uses the MITL and timed automaton generation based on \cite{Maler2006a}. In the following section, the summary of the transformation and our implementation for control synthesis are discussed.

\section{MITL for Motion Planning}\label{sec:motion_planning}

\subsection{MITL to Timed Automata Transformation}
Consider the following requirements: a robot has to eventually visit an area $A$ and another area $B$ in time interval $[l,r]$, and the area $A$ has to be visited first. This can be captured in the following MITL,
\[
\phi = (\neg B \mathbf{U} A)\wedge (\Diamond_{[l,r]} B)
\] 

It can be represented by a logic tree structure, where every node that has children is a temporal logic operator and every leaf node is an atomic proposition, as shown in Fig. \ref{fig:tree}. Every link represents an input output relationship.

\begin{figure}
\centering
\begin{tikzpicture}[
level 1/.style={sibling distance=30mm},level 2/.style={sibling distance=20mm}, level distance=30pt,
edge from parent path={
(\tikzparentnode) |-   
($(\tikzparentnode)!0.5!(\tikzchildnode)$) -| 
(\tikzchildnode)}]                            

  \node[normal, text width=1cm] {$\wedge$}
  child {node[normal,text width=1cm] {$\mathbf{U}$}
	  child {node[normal,text width=1cm] {$\neg$}
			child {node[normal,text width=1cm]{$p(B)$}}
		}
    child {node[normal,text width=1cm] {$p(A)$}}
		}
  child {node[normal,text width=1cm] {$\Diamond_{[l,r]}$}
    child {node[normal,text width=1cm] {$p(B)$}}
  };
\end{tikzpicture}
\caption{Logic tree representation of $\phi$.}
\label{fig:tree}
\end{figure}
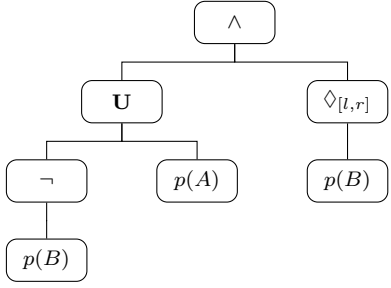

The authors in \cite{Maler2006a} propose to change every temporal logic operator into a timed signal transducer, which is a temporal automaton that accepts input and generates output. Based on their definition the Input Output Timed Automaton (IOTA) used in this paper is defined as the following to fit the control synthesis problem,

\begin{df}[Input Output Timed Automaton]\label{df:iota}
\textit{An input output timed automaton is a tuple $\mathcal{A} =(\Sigma, Q, \Gamma, \mathcal{C}, \lambda, \gamma, I, \Delta, q_0, F)$, where \\
$\Sigma$ is the input alphabet, $Q$ is the finite set of discrete states, \\
$\Gamma$ is the output alphabet, $\mathcal{C}$ is the set of clock variables, and \\
$I$ is the invariant condition defined by conjunction of inequalities of clock variables. The clock variables can be disabled and activated by setting the rate of the clock $0$ or $1$ in the invariant $I$. \\
$\lambda : Q \rightarrow \Sigma$ is the input function, which labels every state to an input, while \\
$\gamma : Q \rightarrow \Gamma$ is the output function, which labels every state to an output. \\
$\Delta$ is the transition relationship between states which is defined by $(p,q,r,g)$, where $p$ is the start state, $q$ is the end state, $r$ is the clock resets, and $g$ is the guard condition on the clock variables.\\
 $q_0$ is the initial state of the timed automaton. \\
$F$ is the set of B\"{u}chi states that have to be visited infinitely often.}
\end{df}

The transformation of Until operator and timed Eventually operator is summarized in  Figs. \ref{fig:pUq}, \ref{fig:EI_GEN} and \ref{fig:EI_CHK}. This is based on \cite{Maler2006a} with minor changes to match with our definition of IOTA. In Fig. \ref{fig:pUq}, the timed automaton for $p \mathcal{U} q$ is shown. The inputs outputs of the states are specified in the second line within the box of each state. $p\bar{q}$ means the inputs are $[1,0]$ and $\bar{p}$ means the inputs can be $[0,1]$ or $[0,0]$, and $\gamma=1$ means the output is 1. Transitions are specified in the format of $g|r$. In this case, all the transitions have guard $z>0$ and reset clock $z$. All states in this automaton are B\"{u}chi accepting states except $s_{p\bar{q}}$. The B\"{u}chi accepting states are highlighted.

\begin{figure}%
\centering
\begin{tikzpicture}[->,>=stealth']

 \node[state,
	text width=2cm] (S0) 
 {\begin{tabular}{l}
  $s_{\bar{p}}$\\
	\quad $\bar{p}/\gamma=0$\\
 \end{tabular}};
  
 \node[state,    	
  right of=S0, 	
  node distance=5cm, 	
  anchor=center,
	text width=2cm] (S1) 	
 {%
 \begin{tabular}{l} 	
  $\bar{s}_{p\bar{q}}$\\
	\quad $p\bar{q}/\gamma=0$\\
 \end{tabular}
 };
 
 \node[normal,
  below of=S0,
  yshift=-2cm,
  anchor=center,
  text width=2cm] (S2) 
 {%
 \begin{tabular}{l}
  $s_{p\bar{q}}$\\
  \quad $p\bar{q}/\gamma=1$\\
 \end{tabular}
 };

 \node[state,
  right of=S2,
  node distance=5cm,
  anchor=center] (S3) 
 {%
 \begin{tabular}{l}
  $s_{pq}$\\
  \quad $pq/\gamma=1$\\
 \end{tabular}
 };

 \path 
	(S0.5) 	edge  node[above]{$z>0 | z:=0$} (S1.175)
	(S1.185)	edge	node[below]{$z>0 | z:=0$} (S0.355)
  (S0.270)     	edge  node[left,align=center]{$z>0$\\ $z:=0$} (S2.90)
	(S2.5) 	edge  node[above]{$z>0 | z:=0$} (S3.175)
	(S3.185)	edge	node[below]{$z>0 | z:=0$} (S2.355)
	(S3.90)     	edge  node[right,align=center]{$z>0$\\$z:=0$} (S1.270)
	(S0.320) 	edge  node[left=0cm,align=center]{$z>0$\\\quad\quad$z:=0$} (S3.160)
	(S3.150)	edge	node[right=-0.2cm,align=center]{$z>0$\\\quad\quad$z:=0$} (S0.332);

\end{tikzpicture}

\caption{The timed automaton for $p \mathcal{U} q$. The inputs and outputs of the states are specified in the second line of each state. $p\bar{q}$ means the inputs are $[1,0]$ and $\bar{p}$ means the inputs can be $[0,1]$ or $[0,0]$, and $\gamma=1$ means the output is 1. Transitions are specified in the format of guard$|$reset. In this case all the transitions have guard $z>0$ and reset clock $z$. All states in this automaton are B\"{u}chi accepting states except $s_{p\bar{q}}$. The B\"{u}chi accepting states are highlighted.}%
\label{fig:pUq}%
\end{figure}
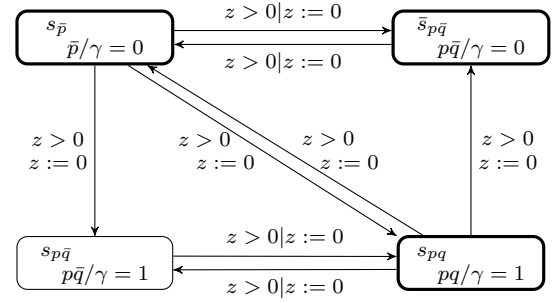

\begin{figure}%
\centering
\begin{tikzpicture}[->,>=stealth']

 \node[normal,
	text width=2cm] (S0) 
 {\begin{tabular}{l}
  $\text{Gen}_1$\\
	\quad $x_1'==1$\\
	\quad $*/\gamma=0$\\
 \end{tabular}};

 \node[normal,
	text width=6cm,
	right of=S0,
	node distance=2cm,
	yshift=1.5cm] (Init) 
 {\begin{tabular}{l}
  $\text{Gen}_0$\\
	\quad $x_i'==0$, $y_i'==0$, $\forall i=1,\ldots, m$\\
 \end{tabular}};
  
 \node[normal,    	
  right of=S0, 	
  node distance=4cm, 	
  anchor=center,
	text width=2cm] (S1) 	
 {%
 \begin{tabular}{l} 	
  $\text{Gen}_2$\\
	\quad $y_1'==1$\\
	\quad $*/\gamma=1$\\
 \end{tabular}
 };
 
 \node[normal,
  below of=S0,
  yshift=-0.5cm,
  anchor=center,
  text width=2cm] (S2) 
 {%
 \begin{tabular}{l}
  $\text{Gen}_3$\\
	\quad $x_2'==1$\\
	\quad $*/\gamma=0$\\
 \end{tabular}
 };

 \node[normal,
  right of=S2,
  node distance=4cm,
  anchor=center] (S3) 
 {%
 \begin{tabular}{l}
  $\text{Gen}_4$\\
	\quad $y_2'==1$\\
	\quad $*/\gamma=1$\\
 \end{tabular}
 };

\node[state] (Sdots) [below of=S2,draw=none,node distance=1cm] {$\ldots$};
\node[state] (Sdots2) [below of=S3,draw=none,node distance=1cm] {$\ldots$};

 \node[normal,
  below of=Sdots,
  yshift=0.1cm,
  anchor=center,
  text width=2cm,
	draw=black, thin] (S4) 
 {%
 \begin{tabular}{l}
  $\text{Gen}_{2m-1}$\\
	\quad $x_m'==1$\\
	\quad $*/\gamma=0$\\
 \end{tabular}
 };

 \node[normal,
  right of=S4,
  node distance=4cm,
  anchor=center] (S5) 
 {%
 \begin{tabular}{l}
  $\text{Gen}_{2m}$\\
	\quad $y_m'==1$\\
	\quad $*/\gamma=1$\\
 \end{tabular}
 };

 \path 	(S0) 	edge  node[above]{$*| y_1:=0$} (S1);
	\path 	(Init.190) 	edge  node[right]{$*| x_1:=0$} (S0);
	\path 	(Init.350) 	edge  node[right]{$*| y_1:=0$} (S1);
	
 \path 	(S1)	edge	node[right]{$* | x_2:=0$} (S2);
 \path 	(S2)	edge	node[above]{$* | y_2:=0$} (S3);
 \path  (S3)	edge	node[right=0.2cm,align=left]{$* | x_3:=0$ \\ $\ldots$} (S4);
	\path  (S4)	edge	node[above]{$* | y_m:=0$} (S5);
	\draw [->] (S5) -- ++(0,-0.7cm)  -- node[below]{$* | x_1:=0$} ++(-5.5cm,0cm) |- (S0.west);
\end{tikzpicture}

\caption{The timed automaton for the generator part of $\Diamond_{I} a$ for motion planning.  $2m$ is the number of clocks required to store the states of the timed eventually ($\Diamond_I$) operator. It is computed based on the interval $I$. Detailed computation and derivation can be found in \cite{Maler2006a}. $x_i'$ represents the rate of the clock $x_i$. By setting the rate to be 0, we essentially deactivate the clock. The \lq{$*$}\rq ~ symbol means that there is no value for that particular input, output or guard for that state. There are no B\"{u}chi states since the time is bounded}%
\label{fig:EI_GEN}%
\end{figure}
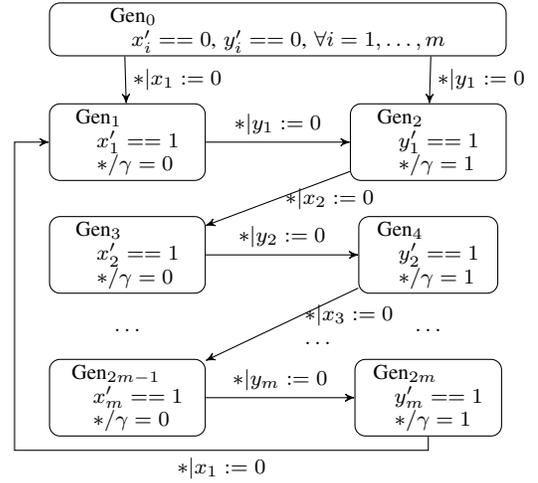

\begin{figure}%
\centering
\begin{tikzpicture}[->,>=stealth']

 \node[normal,
	text width=1.5cm] (S0) 
 {\begin{tabular}{l}
  $\text{Chk}_1$\\
	$y_1\leq b$\\
	$\bar{p}/*$\\
 \end{tabular}};

 \node[normal,
	text width=1.5cm,
	yshift=1.5cm] (Init0) 
 {\begin{tabular}{l}
  $\text{Chk}_{00}$\\
	$x_1 \leq a$\\
 \end{tabular}};

 \node[normal,
	text width=1.5cm,
	right of=Init0,
	node distance=4.5cm] (Init1) 
 {\begin{tabular}{l}
  $\text{Chk}_{01}$\\
	$y_1 \leq a$\\
 \end{tabular}};
  
 \node[normal,    	
  right of=S0, 	
  node distance=3cm, 	
  anchor=center,
	text width=1.5cm] (S1) 	
 {%
 \begin{tabular}{l} 	
  $\text{Chk}_2$\\
	$x_2\leq a$\\
	$p/*$\\
 \end{tabular}
 };

 \node[normal,    	
  right of=S1, 	
  node distance=3cm, 	
  anchor=center,
	text width=1.75cm] (S6) 	
 {%
 \begin{tabular}{l} 	
  $\text{Chk}_3$\\
	$z<b-a$\\
	\,\& $x_2\leq a$\\
	$\bar{p}/*$\\
 \end{tabular}
 };
 
 \node[normal,
  below of=S0,
  yshift=-0.7cm,
  anchor=center,
  text width=1.5cm] (S2) 
 {%
 \begin{tabular}{l}
  $\text{Chk}_4$\\
	$y_2\leq b$\\
	$\bar{p}/*$\\
 \end{tabular}
 };

 \node[normal,
  right of=S2,
  node distance=3cm,
  anchor=center,
	text width=1.5cm] (S3) 
 {%
 \begin{tabular}{l}
  $\text{Chk}_5$\\
	$x_3\leq a$\\
	$p/*$\\
 \end{tabular}
 };
 \node[normal,
  right of=S3,
  node distance=3cm,
  anchor=center,
	text width=1.75cm] (S7) 
 {%
 \begin{tabular}{l}
  $\text{Chk}_6$\\
	$z<b-a$\\
	\,\& $x_3\leq a$\\
	$\bar{p}/*$\\
 \end{tabular}
 };

\node[state] (Sdots) [below of=S2,draw=none,node distance=0.9cm] {$\ldots$};
\node[state] (Sdots2) [below of=S7,draw=none,node distance=0.9cm] {$\ldots$};

 \node[normal,
  below of=Sdots,
  yshift=-0cm,
  anchor=center,
  text width=1.5cm] (S4) 
 {%
 \begin{tabular}{l}
  $\text{Chk}_{3m-2}$\\
	$y_m\leq b$\\
	$\bar{p}/*$\\
 \end{tabular}
 };

 \node[normal,
  right of=S4,
  node distance=3cm,
  anchor=center,
	text width=1.6cm] (S5) 
 {%
 \begin{tabular}{l}
  $\text{Chk}_{3m-1}$\\
	$x_1\leq a$\\
	$p/*$\\
 \end{tabular}
 };
 \node[normal,
  right of=S5,
  node distance=3cm,
  anchor=center,
	text width=1.75cm] (S8) 
 {%
 \begin{tabular}{l}
  $\text{Chk}_{3m}$\\
	$z<b-a$\\
	\,\& $x_1\leq a$\\
	$\bar{p}/*$\\
 \end{tabular}
 };

	\path 	(S0) 	edge  node[above]{$ y_1\geq b|*$} (S1);
	\path 	(S1.5) 	edge  node[above]{$ *|z:=0$} (S6.175);
	\path 	(S6.184) 	edge  node[below]{} (S1.355);
	\path 	(Init0) 	edge  node[left]{$x_1\geq a | *$} (S0);

	\draw [->] 	(Init1.west) 	-|  node[left,yshift=-0.5cm]{$ y_1\geq a | *$} (S1);
	\draw [->] 	(Init1.east) 	-|  node[right,yshift=-0.5cm,align=center]{$y_1\geq a$\\$z:=0$}  (S6);
	
 \path 	(S1)	edge	node[right]{$x_2 \geq a |*$}node[left]{ch!} (S2);
 \path 	(S2)	edge	node[above]{$ y_2\geq b|*$} (S3);
	\path 	(S3.5) 	edge  node[above]{$ *|z:=0$} (S7.175);
	\path 	(S7.184) 	edge  node[below]{} (S3.355);

 \path  (S3)	edge	node[right=0.2cm,align=left]{$x_3\geq a | *$ \\ $\ldots$} (S4);
	\path  (S4)	edge	node[above]{$y_m\geq b | *$} (S5);
	
	\path 	(S5.5) 	edge  node[above]{$ *|z:=0$} (S8.175);
	\path 	(S8.184) 	edge  node[below]{} (S5.355);
	
	\draw [->] (S5) -- ++(0,-0.7cm)  -- node[below]{$ x_1 \geq a | *$} ++(-4.1cm,0cm) |- (S0.west);
\end{tikzpicture}

\caption{The timed automaton for the checker part of $\Diamond_{I} a$ for motion planning. $2m$ is the number of clocks required for the timed eventually ($\Diamond_I$) operator. There are no B\"{u}chi states since the time is bounded}%
\label{fig:EI_CHK}%
\end{figure}
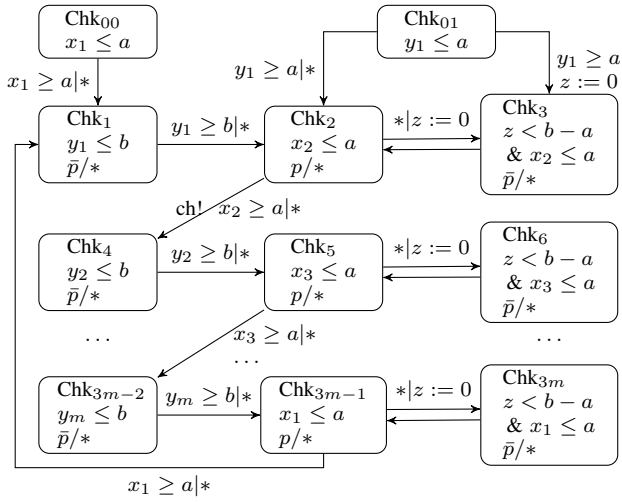

The IOTA for timed eventually ($\Diamond_{I} a$) is decomposed into two automata, the generator generates predictions of the future outputs of the system, while the checker verifies that the generated outputs actually fit the inputs. Detailed derivations and verifications of the models can be found in \cite{Maler2006a}. The composition between them is achieved through the shared clock variables. Additional synchronization (`ch!') is added in our case to determine the final satisfaction condition for the control synthesis. A finite time trajectory satisfies the MITL, when the output signal of the generator automaton (Fig. \ref{fig:EI_GEN}) includes a pair of raising edge and falling edge verified by the checker automaton. The transition from $\text{Chk}_2$ to $\text{Chk}_4$ (Fig. \ref{fig:EI_CHK}) marks the exact time when such falling edge is verified. This guarantees that the time trajectory before the synchronization is a finite time trajectory that satisfies the MITL. 

The composition of IOTA based on logic trees such as that of Fig. \ref{fig:tree} is defined similar to \cite{Maler2006a} with some modifications to handle cases when logic nodes have two children, for example the until and conjunction operators.

\begin{df}[I/O Composition] \hfill \\
\textit{
Let $\mathcal{A}^1_1 =(\Sigma^1_1, Q^1_1, \Gamma^1_1, \mathcal{C}^1_1, \lambda^1_1, \gamma^1_1, I^1_1, \Delta^1_1, {q^1_1}_0, F^1_1)$, $\mathcal{A}^1_2 =(\Sigma^1_2, Q^1_2, \Gamma^1_2, \mathcal{C}^1_2, \lambda^1_2, \gamma^1_2, I^1_2, \Delta^1_2, {q^1_2}_0, F^1_2)$ be the input sides of the automaton. If there is only one, then $\mathcal{A}^1_1$ is used. Let $\mathcal{A}^2 =(\Sigma^2, Q^2, \Gamma^2, \mathcal{C}^2, \lambda^2, \gamma^2, I^2, \Delta^2, q_0^2, F^2)$ be the output side of the automaton. Because of the input output relationship between them, they should satisfies the condition that $[\Gamma^1_1, \Gamma^1_2] = \Sigma^2$. The composition is an new IOTA such that,
\[
\mathcal{A} =(\mathcal{A}^1_1, \mathcal{A}^1_2) \otimes (\mathcal{A}^2) = ([\Sigma_1^1,\Sigma_2^1], Q, \Gamma^2, \mathcal{C}, \lambda, \gamma, I, \Delta, q_0, F)
\]
where
\begin{align*}
Q=&\{(q^1_1,q^1_2,q^2) \in Q^1_1 \times Q^1_2 \times Q^2,  \\
& s.t. (\gamma_1^1(q_1^1),\gamma_2^1(q_2^1)) = \lambda^2(q^2)\}
\end{align*}
$\mathcal{C} = (\mathcal{C}^1_1 \cup \mathcal{C}^1_2 \cup \mathcal{C}^2)$, $\lambda (q^1_1,q^1_2,q^2) = [\lambda^1_1(q^1_1),\lambda^1_2(q^1_2)]$, $I_{(q^1_1,q^1_2,q^2)} = I^1_{(q^1_1,q^1_2)} \cap I^2_{q^2}$, $q_0 = ({q^1_1}_0, {q^1_2}_0, q_0^2)$ and $F = F^1_1 \cap F^1_2 \cup F^2$.
}
\end{df}

\begin{figure*}%
\centering
\includegraphics[width=6.6in]{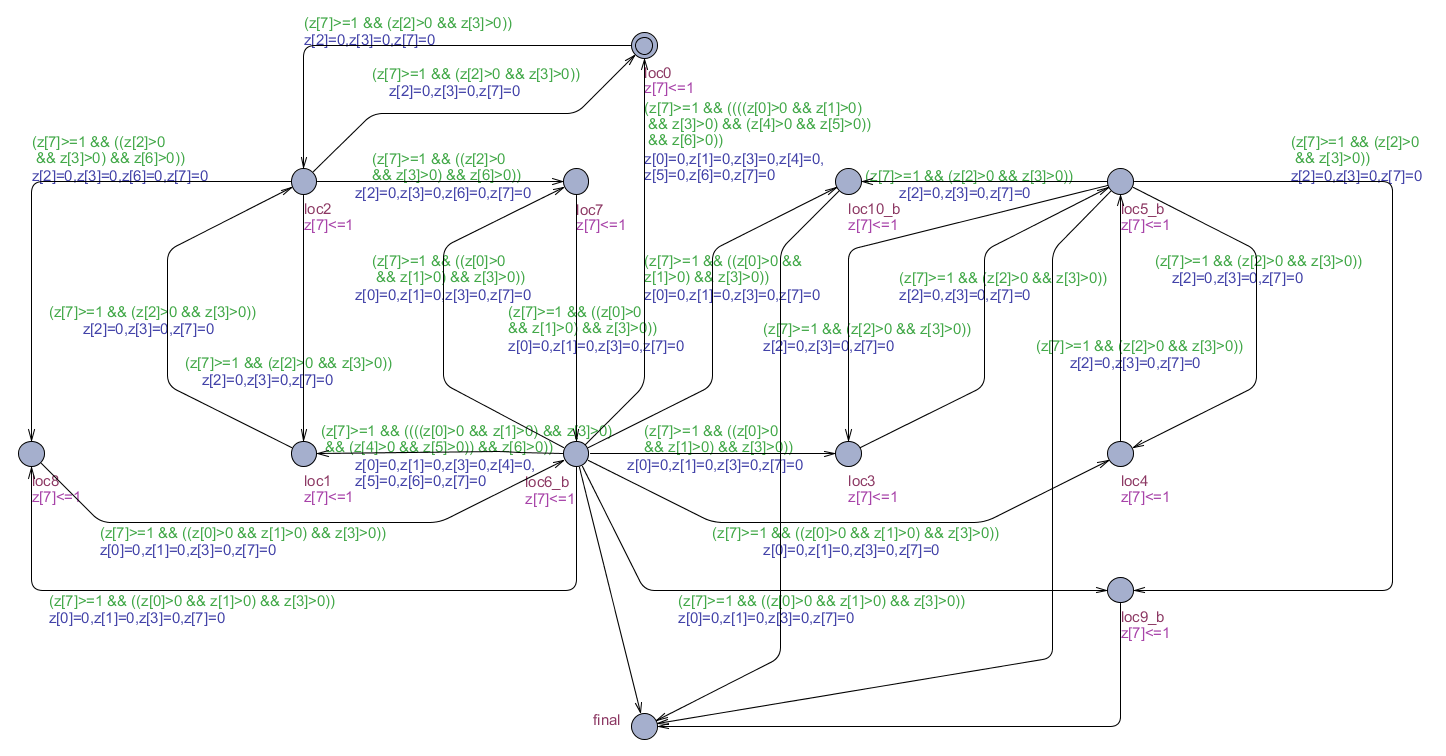}%
\caption{The Resulting timed automaton in UPPAAL of $\phi_1$. The purple colored texts under the state names represent $I$. The green colored texts along the edges represent guard conditions, while the blue ones represent clock resets. The B\"{u}chi accepting states are represented by a subscript b in state names.  }%
\label{fig:until}%
\end{figure*}

\subsection{Path Synthesis using UPPAAL}
The overall path synthesis framework is summarized as following,
\begin{itemize}
	\item First, the robot and the environments are abstracted to a timed automaton (TA) $\mathcal{T}_\text{map}$ using cell decomposition, and the time to navigate from one cell to another is estimated based on the robot's dynamics. For example Fig. \ref{fig:map}.
	\item Second, MITL formula is translated to IOTA $\mathcal{A}$ using method described in previous section.
	\item IOTA  $\mathcal{A}$ is then taken product with the TA $\mathcal{T}_\text{map}$ using the location label. For instance $pos1:B$ in Fig. \ref{fig:map} will be taken product with all states in IOTA that do not satisfy the predicate $p(a)$ but satisfies $p(b)$.
	\item The resulting timed automata are then automatically transformed to an UPPAAL \cite{uppaal} model with additional satisfaction condition verifier. An initial state is chosen so that the output at that state is 1. Any finite trajectory which initiated from that state and satisfying the following conditions will satisfy the MITL specification. Firstly, it has to visit at least one of the B\"{u}chi accepting states, and secondly, it has to meet the acceptance condition for the timed eventually operator. To perform such a search in UPPAAL, a final state is added to allow transitions from any B\"{u}chi accepting state to itself. A verification automaton is created to check the finite acceptance conditions for every timed eventually operator.
	\item An optimal timed path is then synthesized using the UPPAAL verification tool.
\end{itemize}
The implementation of the first and the second step is based on parsing and simplification functions of ltl2ba tool \cite{gastin_fast_2001} with additional capabilities to generate IOTA. We then use the generated IOTA to autogenerate a python script which constructs the UPPAAL model automatically through PyUPPAAL, a python interface to create and layout models for UPPAAL. The complete set of tools\footnote{The tool is available on \url{https://github.com/yzh89/MITL2Timed}} is implemented in C to optimize speed.

\section{Case Study and Discussion} \label{sec:case}
We demonstrate our framework for a simple environment and for some typical temporal logic formulas. Although our tool is not limited by the complexity of the environment, we use a simple environment to make the resulting timed automaton easy to visualize. Let us consider the timed automaton from the abstraction in Fig. \ref{fig:map} and the LTL formula is given as the following,
\[\phi_1 = (\neg A \mathbf{U} B) \wedge (\Diamond A).
\]
This specification requires the robot to visit the area $B$ first and eventually visit $A$ also. The resulting automaton based on the methods in the previous section is as shown in Fig. \ref{fig:until}. Each state corresponds to a product state between a state in $\mathcal{T}_\text{map}$ and a state in IOTA $\mathcal{A}$. The B\"{u}chi accepting states are indicated by an additional \textit{b} in their state names. We obtained the optimal path by first adding a final state and linking every accepting states to it, and then using UPPAAL to find one of the shortest path that satisfies condition ``$E<> final$''. UPPAAL will then compute one fastest path in the timed automaton that goes to final state, if one such exists. If such exists, this feasible path is a finite trajectory that satisfies the specification. 
In this paper, we are more interested in planning a path that satisfies MITL, so finite time trajectory is a valid solution.
The initial states of the automaton is loc0 which is the only state at pos0 that outputs 1. The optimal trajectory is $loc0 \rightarrow loc2 \rightarrow loc7 \rightarrow loc6_b$, in the product automaton. This trajectory means that the optimal way for a robot to satisfy the LTL is to traverse the map in the following order, $pos0 \rightarrow pos1:B \rightarrow pos0 \rightarrow pos3:A$.

\begin{figure*}%
\includegraphics[width=7in]{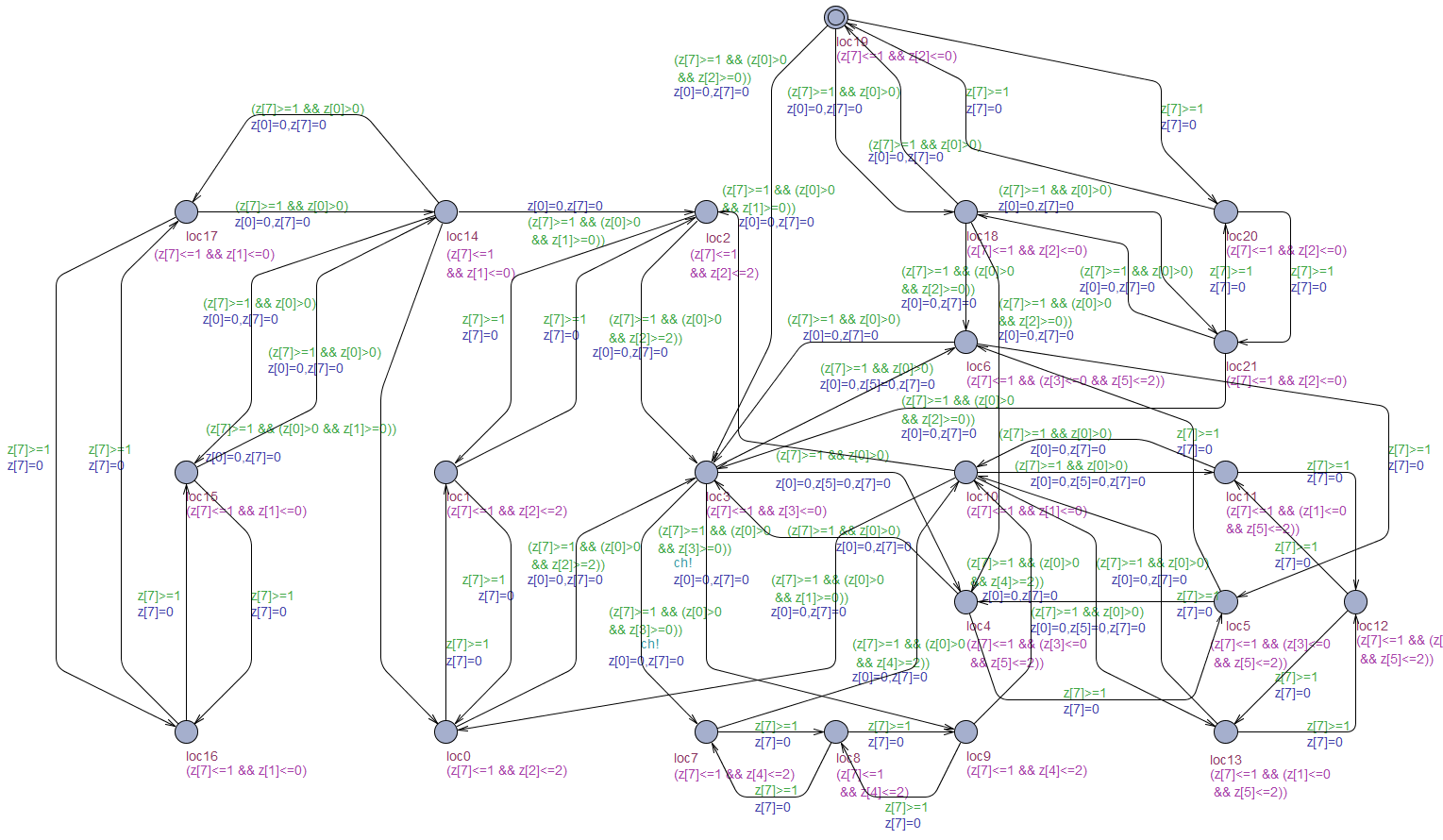}%
\caption{This shows one of the resulting timed automata in UPPAAL of $\phi_2$ corresponding to the checker of timed eventually operator and untimed always. Some of the edges are further annotated by synchronization signal (ch!).}%
\label{fig:always_event_I}%
\end{figure*}

\begin{figure*}[!tb]
    \centering{\subfloat[]{\includegraphics[width=3.5in]{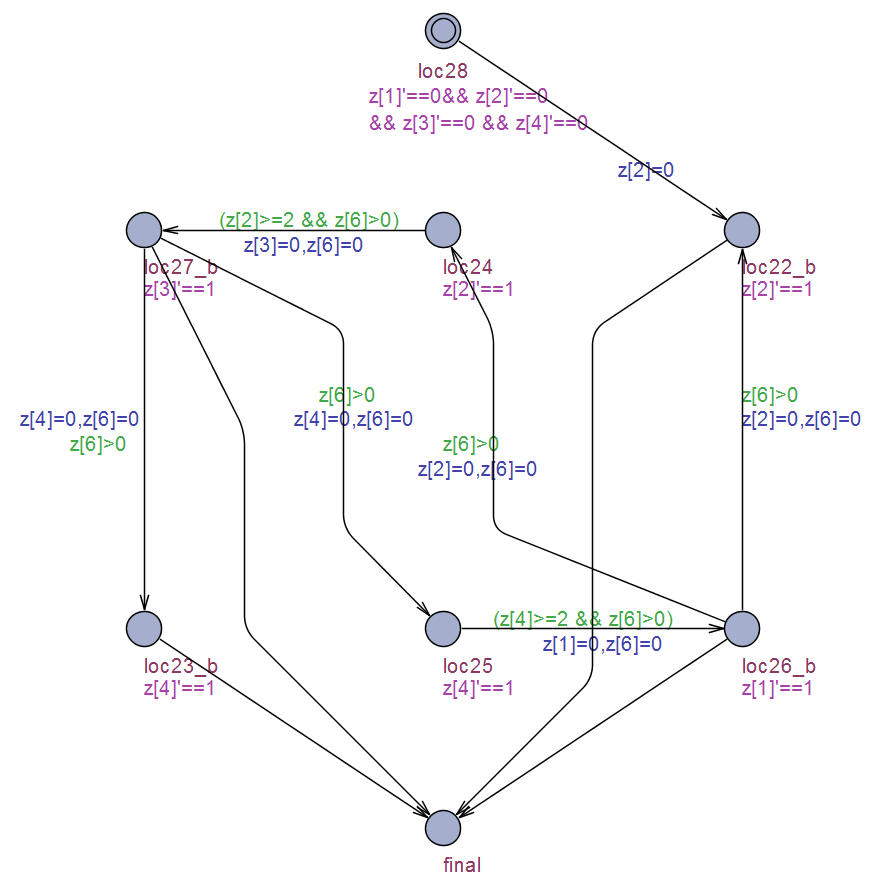}
		\label{fig:always_event_I1}}
		\hfil
		\subfloat[]{\includegraphics[width=1.5in]{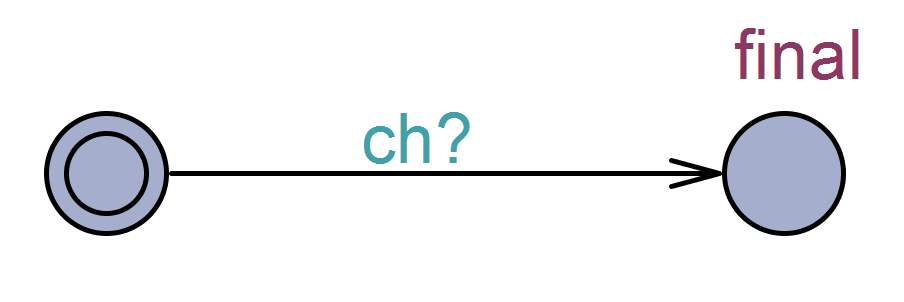}
		\label{fig:always_event_I2}}}
		\caption{Fig. (a) shows the other timed automata of $\phi_2$ corresponding to the generator of timed eventually. Fig. (b) shows the verification timed automaton, that checks if the falling edge of generator is ever detected, i.e. if a synchronization signal (ch!) has happened. This signal marks the end of a full eventually cycle. Similar to the LTL case, we ask UPPAAL to check for us the following property, if there is a trajectory that leads to the final states in (a) and (b). The optimal path in this case is $(loc19,loc28)\rightarrow (loc3,loc28) \rightarrow (loc3,loc22_b)$. The states are products of states of Fig. \ref{fig:always_event_I} and Fig. (a). This path corresponds to $(pos0,t\in[0,1])\rightarrow (pos3:A,t\in[1,2]) \rightarrow (pos0,t\in[2,3])$ in physical space. Repeating this path will satisfy $\phi_2$. }
\label{fig:always_event_I_2}
\end{figure*}

In the second test case, the environment stays the same and the requirement is captured in a MITL formula $\phi_2$
\[
\phi_2 = \Box \Diamond_{[0,2]} A
\]
This requires the robot to perform periodic survey of area \textit{A} every 2s. The resulting timed automata are shown in Fig. \ref{fig:always_event_I} and Fig. \ref{fig:always_event_I_2}. As we discussed earlier, if a synchronization signal (ch!) is sent, the falling edge for output of generator automaton is detected and verified. This marks the end of a finite trajectory that satisfies the MITL constraints. We used the automaton in Fig. \ref{fig:always_event_I_2} (b) to receive such signal. Similar to the LTL case, we ask UPPAAL to find a fastest path that leads to the final states in Fig. \ref{fig:always_event_I_2}(a) and \ref{fig:always_event_I_2}(b) if such exists.

The optimal trajectory in this case is $(loc19,loc28)\rightarrow (loc3,loc28) \rightarrow (loc3,loc22_b)$, which corresponds to $(pos0,t\in[0,1])\rightarrow (pos3:A,t\in[1,2]) \rightarrow (pos0,t\in[2,3])$. Then this trajectory repeats itself.

All the computations are done on a computer with 3.4GHz processor and 8GB memory. Both of the previous examples require very small amount of time $(<0.03s)$. We also tested our implementation against various other complex environments and MITL formulas. The Table \ref{MITLTime} summarizes our results for complex systems and formulas. The map we demonstrated earlier is a 2x2 map (Fig. \ref{fig:map}), we also examine the cases for 4x4 and 8x8 grid maps. The used temporal logic formulas are listed below. The time intervals in the formula is scaled accordingly to the map size.
\[
\phi_3 = \Diamond_{[0,4]} A \wedge \Diamond_{[0,4]} B
\]
\[\phi_4 = \Diamond_{[2,4]} A \wedge \Diamond_{[0,2]} B\]

\begin{table}
\begin{center}
\caption{Computation Time for typical MITL formula}
\label{MITLTime}
\begin{tabular}{|c|c|c|c|c|}
\hline
MITL& Map  & Transformation & Num of Timed & Synthesis\\
Formula& Grid &Time & Automata Transitions & Time \\\hline
$\phi_1$ &2x2 & $<0.001s$ & 22 & 0.016s\\ \hline
$\phi_2$ &2x2 & $0.004s$ & 69 & 0.018s\\ \hline
$\phi_3$ &2x2 & $0.40s$ & 532 & 0.10s\\ \hline
$\phi_4$ &2x2 & $0.46s$ & 681 & 0.12s\\ \hline
$\phi_1$ &4x4 & $0.004s$ & 181 & 0.062s\\ \hline
$\phi_1$ &8x8 & $0.015s$ & 886 & 0.21s\\ \hline
$\phi_2$ &8x8 & $0.015s$ & 1795 & 0.32s\\ \hline
\end{tabular}
\end{center}
\end{table}

It can be seen from the Table \ref{MITLTime} that our algorithm works very well with common MITL formulas and scales satisfactorily with the dimensions of the map.

\section{Continuous Trajectory generation}\label{sec:continuous} 

In this section, we briefly talk about generating a continuous trajectory from the discrete motion plan obtained from the timed automaton. Let us consider the nonholonomic dynamics of a unicycle car as given in (\ref{eq:dynamics}).

\begin{equation} \label{eq:dynamics}
 \dot{\begin{bmatrix}
x\\y\\ \theta
\end{bmatrix}} =u\begin{bmatrix}
\cos\theta \\ \sin\theta \\0
\end{bmatrix}+ \omega \begin{bmatrix}
0\\0\\1
\end{bmatrix}
\end{equation}
where $\omega$ and $u$ are the control inputs. It should be noted that the above nonholonomic dynamics is controllable and we assume no constraints on the control inputs at this point. 
The above sections provide the sequence of cells to be visited in the grid like environment (Fig. \ref{fig:traj}). 
\begin{figure}
\centering
\includegraphics[width=3in]{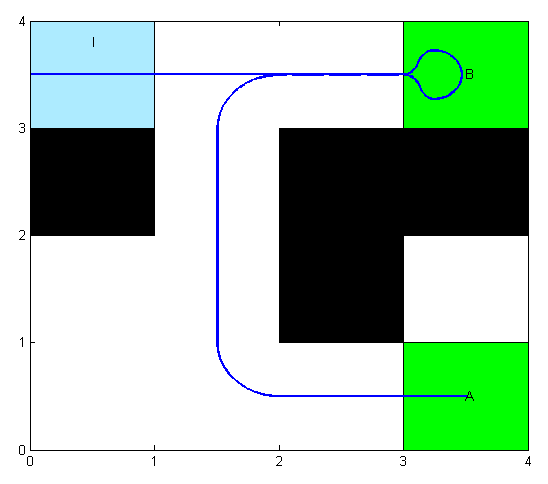} 
\caption{Workspace and the continuous trajectory for the specification $\phi_1$. The initial location is the top-left corner cell (I).}
\label{fig:traj}
\end{figure}

The output of the timed automaton are treated as the time-stamped way points for the robot to move. We have to assure that the robot moves from one way point to the next with the given initial and final time and at the same time, the trajectory should remain within the associated cells. 

Since our environment is decomposed in rectangular cells, the robot will only move forward, turn right, turn left and make a U-turn. We synthesize a controller that can make the robot to perform these elementary motion segments within the given time.

 For moving forward the input $\omega$ is chosen to be $0$ and the velocity $u$ is tuned so that the robot reaches the final position in time. For turning left and turning right $\omega$ is chosen to take positive and negative values respectively so that a circular arc is traversed. Similarly the U-turn is also implemented so that the robot performs the U-turn within a single cell. 
 
 Let us denote the state of the system at time $t$ by the pair $(q,t)$ i.e. $x(t)=q_1,~y(t)=q_2$ and $\theta(t)=q_3$ where $q=[q_1,~q_2,~q_3]$. Then we have the following lemma on the optimality of the 
 control inputs.
 
 \begin{lm} \label{lem:opt}
  \textit{ If $\bar u(t)$ and $\bar \omega (t)$, $t \in [0,1]$ is a pair of control inputs s.t. the dynamics moves from the state $(q_0,0)$ to $(q_1,1)$, then $u(t_0+t)=\frac1\lambda \bar u(
 \frac t\lambda)$ and $\omega(t_0+t)=\frac1\lambda \bar \omega( \frac t\lambda)$ move the system from $(q_0,t_0)$ to $(q_1,t_0+\lambda)$ for any $\lambda >0$. \\
 Moreover, if $\bar u$ and $\bar \omega$ move the system optimally, i.e.
 \begin{equation}
  J(\bar u,\bar \omega)=\min_{u(\cdot),w(\cdot)} \int_{0}^{1} [r_1 u^2(t)+r_2 w^2(t)]dt
 \end{equation}
 then $u$ and $\omega$ given above are also optimal for moving the system from $(q_0,t_0)$ to $(q_1, t_0+\lambda)$, i.e.
 \begin{equation}
 J_1(u,\omega)=\min_{u_1(\cdot),w_1(\cdot)} \int_{t_0}^{t_0+\lambda}[r_1 u^2_1(t)+r_2 w^2_1(t)]dt. 
 \end{equation}
}
 \end{lm}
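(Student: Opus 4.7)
The plan is to prove the feasibility assertion by a direct change of time variable in the dynamics (\ref{eq:dynamics}), and then obtain the optimality statement as a corollary of a simple scaling identity between the two cost functionals. For the first part, I would introduce the rescaled trajectory $\tilde q(t)=\bar q\bigl((t-t_0)/\lambda\bigr)$ on $[t_0,t_0+\lambda]$ and differentiate componentwise. The chain rule gives $\dot{\tilde x}(t)=\tfrac{1}{\lambda}\dot{\bar x}\bigl((t-t_0)/\lambda\bigr)=\tfrac{1}{\lambda}\bar u\bigl((t-t_0)/\lambda\bigr)\cos\tilde\theta(t)$ and analogous expressions for $\dot{\tilde y}$ and $\dot{\tilde\theta}$, which is precisely (\ref{eq:dynamics}) driven by the prescribed controls $u,\omega$. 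The boundary conditions $\tilde q(t_0)=\bar q(0)=q_0$ and $\tilde q(t_0+\lambda)=\bar q(1)=q_1$ are then immediate.

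For the optimality claim, I would first establish a clean relationship between the two cost integrals. The substitution $s=(t-t_0)/\lambda$ (so $dt=\lambda\,ds$) combined with $u^2(t_0+\lambda s)=\tfrac{1}{\lambda^2}\bar u^2(s)$ (and the analogous identity for $\omega$) yields, after cancellation, the one-line identity $J_1(u,\omega)=\tfrac{1}{\lambda}J(\bar u,\bar\omega)$. This is essentially the only nontrivial calculation in the proof.

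To conclude, I would take an arbitrary competitor pair $(u_1,\omega_1)$ that steers the system from $(q_0,t_0)$ to $(q_1,t_0+\lambda)$ and run the scaling in reverse: define $\tilde u(s)=\lambda u_1(t_0+\lambda s)$ and $\tilde\omega(s)=\lambda\omega_1(t_0+\lambda s)$ on $[0,1]$. The chain-rule computation of the first paragraph, applied in the opposite direction, shows that $(\tilde u,\tilde\omega)$ steers $(q_0,0)$ to $(q_1,1)$. Optimality of $(\bar u,\bar\omega)$ then gives $J(\bar u,\bar\omega)\le J(\tilde u,\tilde\omega)$, and the same change of variables yields $J(\tilde u,\tilde\omega)=\lambda J_1(u_1,\omega_1)$; dividing by $\lambda$ delivers $J_1(u,\omega)\le J_1(u_1,\omega_1)$, as desired. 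The only mildly delicate point is verifying that the reverse scaling is truly a bijection between admissible control pairs on $[t_0,t_0+\lambda]$ and those on $[0,1]$, so that the optimality of $(\bar u,\bar\omega)$ can legitimately be invoked; since (\ref{eq:dynamics}) is time-invariant and no pointwise bounds on the controls are imposed, scaling by $\lambda$ preserves admissibility and this step is pure bookkeeping.
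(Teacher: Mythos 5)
Your proposal is correct and follows essentially the same route as the paper: a time-rescaling of the trajectory via the chain rule for the reachability claim, and the observation that the scaling $u_1(t_0+\lambda s)\mapsto \lambda u_1(t_0+\lambda s)$ is a cost-preserving (up to the factor $1/\lambda$) bijection between admissible controls on $[t_0,t_0+\lambda]$ and on $[0,1]$ for the optimality claim. The only difference is cosmetic --- you argue directly against an arbitrary competitor while the paper phrases it as a comparison with a hypothetical better optimum and derives equality of costs --- and your explicit check that the reverse-scaled competitor actually steers $(q_0,0)$ to $(q_1,1)$ is a small point the paper leaves implicit.
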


\begin{proof}
Let us first denote 
\[
G(q)=\begin{bmatrix}
                            \cos(\theta(t)) && 0\\ \sin(\theta(t)) && 0\\0 && 1
                           \end{bmatrix}
													\]
 where $q=[x(t), y(t),\theta(t)]$. Therefore, dynamics (\ref{eq:dynamics}) can be written as $\dot q=G(q) \begin{bmatrix}
                                                                                                       u \\\omega
                                                                                                      \end{bmatrix}.
$
Let us now consider $\bar q(t)=[x(t_0+\lambda t),~y(t_0+\lambda t),~\theta(t_0+\lambda t)]$.   
Therefore, $\dot{\bar q} =\lambda G(\bar q)\begin{bmatrix} u(t_0+\lambda t) \\ \omega (t_0+\lambda t) \end{bmatrix}$. 
Using the definition of $u$ and $\omega$ in the lemma, we get $\dot{\bar q}=G(\bar q)
\begin{bmatrix}
\bar u \\\bar \omega
\end{bmatrix}$
By the hypothesis of the lemma, $\bar{u}$ and $\bar \omega$ move the system from $(q_0,0)$ to $(q_1,1)$ i.e. from $[x(t_0),y(t_0),\theta(t_0)]=q_0$ to $q_1=\bar q(1)=[x(t_0+\lambda),y(t_0+\lambda),\theta(t_0+\lambda)]$.\\

For optimality, let the proposed $u,~\omega$ be not optimal and $u^*$ and $\omega^*$ are optimal ones i.e. 
\begin{equation}
\int_{t_0}^{t_0+\lambda}[r_1{u^*}^2(t)+r_2{\omega^*(t)}^2]dt \le \int_{t_0}^{t_0+\lambda}[r_1u^2(t)+r_2\omega^2(t)]dt
\label{eq:opt1}
\end{equation}
Now let us construct $\bar u^*(t)=\lambda u^*(t_0+\lambda t)$ and $\bar \omega^*(t)=\lambda \omega^*(t_0+\lambda t)$.

Therefore from (\ref{eq:opt1}),
\begin{multline*}
\int_{0}^{1}[r_1{u^*}^2(t_0+\lambda s)+r_2{\omega^*}^2(t_0 +\lambda s)]ds \\
\leq \int_{0}^{1}[r_1u^2(t_0+\lambda s)+r_2\omega^2(t_0+\lambda s)]ds
\end{multline*}

\begin{equation} \int_{0}^{1}[r_1 {\bar {u^*}}^2(s)+r_2{\bar {\omega^*}}^2(s)]ds \le \int_{0}^{1}[r_1{\bar u}^2(s)+r_2\bar \omega^2(s)]ds \label{Eqn:ineq1} \end{equation}

But, by the hypothesis, $\bar u$ and $\bar \omega$ are optimal and hence 
\begin{equation} \label{Eqn:ineq2}
 \int_{0}^{1}[r_1 {\bar {u^*}}^2(s)+r_2{\bar {\omega^*}}^2(s)]ds \ge \int_{0}^{1}[r_1{\bar u}^2(s)+r_2\bar \omega^2(s)]ds
\end{equation}
Combining (\ref{Eqn:ineq1}) and (\ref{Eqn:ineq2}) we get,
\begin{equation}
 \int_{0}^{1}[r_1 {\bar {u^*}}^2(s)+r_2{\bar {\omega^*}}^2(s)]ds = \int_{0}^{1}[r_1{\bar u}^2(s)+r_2\bar \omega^2(s)]ds
\end{equation}

After changing the dummy variables inside integration again, one can obtain 

\begin{equation}
 \int_{t_0}^{t_0+\lambda}[r_1 { {u^*}}^2(s)+r_2{{\omega^*}}^2(s)]ds = \int_{t_0}^{t_0+\lambda}[r_1{u}^2(s)+r_2\omega^2(s)]ds
\end{equation}

Hence the proposed $u$ and $\omega$ are optimal whenever $\bar u$ and $\bar \omega$ are optimal.
  \end{proof}
 \begin{rem}
  Lemma \ref{lem:opt} states that if the controls for elementary motions from initial time $0$ to final time $1$ are synthesized, then by properly scaling and shifting in the time and scaling the magnitude, controls for any movement from any initial time to any final time can be synthesized without further solving any optimization problem.
 \end{rem}

\section{Conclusion} \label{sec:conclusion}
In this paper, we have presented a timed automaton based approach to generate a discrete plan for the robot to perform temporal tasks with finite time constraints. We implemented the algorithm in an efficient and generic way so that it can translate the time constraints and temporal specifications to timed automaton models in UPPAAL and synthesize the path accordingly. We then demonstrated our algorithm in grid type environments with different MITL formulas. We have considered grid type of environment for our case studies, but it can be generalized to most of the motion planning problems when the environment can be decomposed into cells. We also provide a brief overview of how an optimal continuous trajectory can be generated from the discrete plan. For future works, we are considering to extend the work to include dynamic obstacles as well as for multiagent system. 



\bibliographystyle{IEEEtran}
\bibliography{IEEEabrv,bib}

\end{document}